\documentclass[11pt]{article}
\usepackage[totalwidth=13.0cm,totalheight=20.0cm]{geometry}
\usepackage{latexsym,amsthm,amsmath,amssymb,url}
\usepackage[ruled, linesnumbered]{algorithm2e}

\usepackage{tikz,authblk}
\usetikzlibrary{decorations.pathreplacing}

\newtheorem{lemma}{Lemma}

\newtheorem{definition}{Definition}
\newtheorem{theorem}{Theorem}

\begin{document}

\title{Note on Perfect Forests in Digraphs}
\author[1]{Gregory Gutin}
\author[2,3]{Anders Yeo}
\affil[1]{Department of Computer Science, Royal Holloway, University of London, TW20 0EX, Egham, Surrey, UK}
\affil[2]{Engineering Systems and Design, Singapore University of Technology and Design, 8 Somapah Road 487372, Singapore} 
\affil[3]{Department of Mathematics, University of Johannesburg, Auckland Park, 2006 South Africa}

\date{}
\maketitle

\begin{abstract}
\noindent A spanning subgraph $F$ of a graph $G$ is called {\em perfect} if $F$ is a forest,
the degree $d_F(x)$ of each vertex $x$ in $F$ is odd, and
each tree of $F$ is an induced subgraph of $G$.
Alex Scott (Graphs \& Combin., 2001) proved that every connected graph $G$ contains a perfect forest if and only if $G$ has an even number of vertices. 
We consider four generalizations to directed graphs of the concept of a perfect forest. While the problem of existence of the most straightforward one is NP-hard, for the three others this problem is polynomial-time solvable. Moreover, every digraph with only one strong component contains a directed forest of each of these three generalization types. One of our results extends Scott's theorem to digraphs in a non-trivial way.
\end{abstract}

\pagestyle{plain}

\section{Introduction}\label{sec:intro}

The number of vertices of a graph $G$ is called its {\em order}. A spanning subgraph $F$ of a graph $G$ is called a {\em perfect forest} if $F$ is a
forest, the degree $d_F(x)$ of each vertex $x$ in $F$ is odd, and each tree of $F$
is an induced subgraph of $G$. Note that a perfect matching is a perfect forest. Clearly, if a connected graph $G$ has a perfect forest, then $G$ is of even order.
Alex Scott \cite{Sco2001} proved that surprisingly the opposite implication is also true.

\begin{theorem} \label{undirected} \cite{Sco2001}
A connected graph $G$ contains a perfect forest if and only if $G$ has an even number of vertices.
\end{theorem}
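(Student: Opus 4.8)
The plan is to prove the nontrivial direction---that every connected graph $G$ of even order contains a perfect forest---by induction on the order of $G$, exhibiting a reduction that strictly decreases the order while preserving connectivity and the even-order condition. The base case is trivial: a connected graph on two vertices is a single edge, which is itself a perfect forest. For the inductive step, the idea is to find an induced subtree $T$ of $G$ in which every vertex has odd degree (inside $T$) and such that $G - V(T)$, while possibly disconnected, has even order and each of its components is again connected of even order; one then applies the induction hypothesis to each component of $G - V(T)$ and takes the union of the resulting perfect forests together with $T$. The key point is that since $T$ is an induced subgraph of $G$, and the trees we obtain from the components of $G-V(T)$ are induced in those components hence induced in $G$, the combined forest still has all trees induced and all degrees odd.

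First I would look for the smallest such "building block": a single edge $xy$ whose removal leaves each component of even order would suffice, but such an edge need not exist, so the real work is to show that if no such edge exists we can still find a larger induced odd-degree subtree doing the job. The standard approach (and, I suspect, Scott's) is a parity/linear-algebra argument: consider potential induced subtrees and track the parities of component sizes of their complements. Concretely, I would set up the argument by taking a spanning tree of $G$, or by working with a carefully chosen vertex and its neighborhood, and argue that among the available configurations there must be one with the right parity pattern. An alternative framing is to work over $\mathrm{GF}(2)$: perfect forests correspond (roughly) to solutions of a system encoding the degree-parity constraints, and a dimension/counting argument forces a solution of the desired combinatorial type to exist when $|V(G)|$ is even.

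The hard part will be controlling the global structure of $G - V(T)$: removing an induced subtree can shatter $G$ into many components, and one must guarantee simultaneously that \emph{every} component has even order (so the induction applies) without sacrificing the induced-ness or the odd-degree property of $T$. Handling this is where a clever choice of $T$---rather than a greedy or arbitrary one---becomes essential, likely via an extremal argument (choose $T$ minimal, or maximal, with respect to some parameter) together with a parity count showing the number of odd components is itself even and can be "fixed up." Once that structural lemma is in place, the assembly of the perfect forest from the pieces is routine, and the converse direction is immediate since in any forest the sum of degrees is even, so a forest with all degrees odd has an even number of vertices, and summing over the trees of a spanning perfect forest of connected $G$ gives $|V(G)|$ even.
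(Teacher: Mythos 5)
There is a genuine gap: your entire inductive step rests on a structural lemma you never prove, namely that $G$ contains an induced subtree $T$ with all degrees odd such that \emph{every} component of $G-V(T)$ has even order. You correctly identify this as ``the real work'' and ``the hard part,'' but then only gesture at how it might be done (``an extremal argument,'' ``a parity/linear-algebra argument,'' ``a dimension/counting argument over $\mathrm{GF}(2)$'') without supplying any of it. That lemma is essentially the whole content of Scott's theorem -- a single vertex deletion or edge choice can indeed shatter the graph into components of the wrong parity, and no parity count is offered showing the bad components can be ``fixed up.'' As written, the argument for the nontrivial direction is a plan, not a proof. (Your converse direction is fine: in each tree of a perfect forest all degrees are odd, and since every graph has an even number of odd-degree vertices, each tree has even order, so $|V(G)|$ is even.)

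It is worth contrasting this with how the theorem actually falls out of the paper, because the paper's route sidesteps exactly the difficulty you got stuck on. Rather than deleting an induced subtree and recursing on the components of the complement, one works inside a single spanning structure and fixes the two requirements (odd degrees, induced trees) in two separate, easy stages. First, any spanning tree of even order contains a spanning forest in which every vertex has odd degree: by induction, take a vertex $u$ at maximum distance from the root with parent $v$; if $v$ has another leaf child $w$, delete $u,w$, recurse, and add back the edges $vu,vw$; otherwise $u$ is the only child of $v$, so delete the pair $u,v$, recurse, and add the single edge $vu$. Second, if some tree of the resulting forest has a chord $uv$ of $G$, replace the tree path between $u$ and $v$ by the edge $uv$: the degrees of $u$ and $v$ are unchanged and all internal path vertices drop by $2$, so all degrees stay odd, and the number of edges strictly decreases, so this terminates with every tree induced. (The paper phrases this in the language of out-forests of the bidirected digraph $G^*$, but this is the undirected content.) The lesson is that controlling the components of $G-V(T)$ globally is unnecessary; local leaf-pair surgery plus chord-swapping does the job without any delicate parity bookkeeping.
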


Intuitively, it is clear that a perfect forest can provide a useful structure in a graph and, in particular, this notion was used by Sharan and Wigderson \cite{SW} to prove that the perfect matching problem for bipartite cubic graphs belongs to the complexity class ${\cal NC}$.

Scott's original proof was graph-theoretical; a shorter one based on linear algebra over $\mathbb{F}_2$ was recently given in \cite{GG}. Recently, Yair Caro \cite{Caro} produced another shorter proof, which is graph-theoretical.

I this paper, we will generalize Scott's theorem and related notions to directed graphs.\footnote{All basic notions on directed graphs not introduced in this paper can be found in the monograph \cite{JBJGG}.} Often there is no single generalization of an undirected graph notion to directed graphs. For example, connectivity in undirected graphs can be generalised to {\em strong connectivity} when there are directed paths in both directions for every pair of vertices, {\em connectivity} when the {\em underlying graph} (i.e., the undirected graph obtained by replacing all arcs with edges with the same end-vertices and replacing all parallel edges by single edges) is connected, and {\em unilateral connectivity} when there is a directed path between every pair of vertices but possibly only in one direction. 
Another example is the notion of a tree. We can consider any orientations of trees, but often only out-trees (or in-trees) are of interest: an orientation of a tree is called an  {\em out-tree} if it has only one vertex of in-degree zero called its {\em root} and all other vertices are of in-degree one. An {\em out-forest} in a directed graph is a vertex-disjoint collection of out-trees. Let us introduce the following four generalizations of the notion of a perfect forest. 

\begin{definition}\label{def1}
A spanning subgraph $F$ of a digraph $D$ is called a {\em perfect out-forest} if $F$ is an
out-forest, the degree of each vertex in the underlying graph of $F$ is odd, and each out-tree of $F$
is an induced subgraph of $D$. 
\end{definition}

\begin{definition}\label{def2}
A spanning subgraph $F$ of a digraph $D$ is called an {\em almost perfect out-forest} if $F$ is an
out-forest, the degree of each vertex in the underlying graph of $F$ is odd, and every arc in $D$ is either in $F$, goes between different out-trees in $F$
or goes from a vertex to an ancestor of that vertex in an out-tree in $F$ (the last kind of arc is called a {\em backward arc}). 
\end{definition}

\begin{definition}\label{def3}
A spanning subgraph $F$ of a digraph $D$ is called a {\em weak perfect out-forest} if $F$ is an
out-forest and the degree of each vertex in the underlying graph of $F$ is odd.
\end{definition}

\begin{definition}\label{def4}
A spanning subgraph $F$ of a digraph $D$ is called a {\em even out-forest} if $F$ is an
out-forest where every out-tree in $F$ has even order.
\end{definition}

We will consider three families of connected digraphs (i.e., the underlying graphs are connected) of even order. 
The class of all strongly connected digraphs of even order will be denoted by ${\cal D}^{st}$. The class of all connected digraphs of even order
which contain only one initial strong component will be denoted by ${\cal D}^u$. Alternatively, any digraph in ${\cal D}^u$ contains
some vertex, say $u$, such that for any vertex in the digraph it can be reached from $u$ by a directed path.
The class of all connected digraphs of even order will be denoted by ${\cal D}$. 

In the next section, we will show several results on the generalizations of a perfect forest. 
We summarize the main results in the following table.

\begin{center}
\begin{tabular}{|c||c|c|c|} \hline
 &  ${\cal D}^{st}$ & ${\cal D}^u$ & ${\cal D}$ \\ \hline \hline
perfect    &   $NP$-hard   & $NP$-hard  & $NP$-hard   \\
out-forest &   to decide  & to decide & to decide  \\ \hline
almost perfect &  Always & Always  &  May not exist, but can be \\
out-forest     &  Exists &  Exists &  decided in polynomial time \\ \hline
weak perfect   &  Always & Always  & May not exist, but can be \\
out-forest     &  Exists &  Exists &   decided in  polynomial time \\ \hline
Even           &  Always & Always  & May not exist, but can be \\
out-forest     &  Exists &  Exists &   decided in polynomial time \\ \hline
\end{tabular}
\end{center}

The concept of a backward arc of an out-tree $T$ was introduced in Definition \ref{def2}. A {\em forward arc}  is not an arc of $T$, but it goes from an ancestor of a vertex to the vertex itself. Finally, any arc between two vertices of $T$ is called a {\em cross arc} if it is not on $T$, and not a forward or backward arc.  

Our result that every strongly connected digraph of even order contains an almost perfect out-forest, generalizes Scott's theorem. Indeed, let $G$ be a connected graph of even order and let $G^*$ be the digraph obtained from $G$ by replacing every edge $\{x,y\}$ with two arcs $xy$ and $yx$. Clearly, $G^*$ is strongly connected. By our result, $G^*$ has an almost perfect out-forest $F$. Note that $F$ has no backward arcs as every such arc would imply the existence of a forward arc, which is not allowed for almost perfect out-forests. 
Thus, the underlying graph of every out-tree in $F$ is an induced tree in $G$ and so $F$ is simply an orientation of a perfect forest in $G$.

\section{Results}\label{sec:proofs}

Let us start with the following two lemmas which imply that that if a digraph $D \in {\cal D}$ has an even out-forest, then it also has a a weak perfect out-forest and even an almost perfect out-forest.

\begin{lemma} \label{even_to_weak}
If $D \in {\cal D}$ contains an even out-forest then $D$ contains a weak perfect out-forest.
\end{lemma}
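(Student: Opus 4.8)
The plan is to reduce the statement to the classical (undirected) tree case and then apply Scott's theorem. Let $F$ be an even out-forest of $D$; by definition $F$ is a vertex-disjoint union of out-trees $T_1,\dots,T_m$, each of even order, which together span $V(D)$ since $F$ is spanning.

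Fix an out-tree $T_i$. Its underlying graph $U_i$ is a tree, hence a connected graph, and it has an even number of vertices, so Theorem~\ref{undirected} applies: $U_i$ contains a perfect forest, i.e.\ a spanning subforest $P_i\subseteq U_i$ in which every vertex has odd degree. (One could instead prove this tree case directly by a short induction on $|V(T_i)|$, but invoking Scott's theorem is cleanest.) Now orient each edge of $P_i$ the way it is oriented in $T_i$ (equivalently, in $D$); call the resulting subdigraph $P_i'$. Then $P_i'$ is obtained from the out-tree $T_i$ by deleting a set of arcs, and I claim any such subdigraph is an out-forest: deleting a single arc $uv$ from an out-tree leaves $v$ with in-degree $0$, so its component becomes an out-tree rooted at $v$, while every other vertex keeps in-degree exactly $1$; iterating this observation shows $P_i'$ is a vertex-disjoint union of out-trees.

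Finally set $F' := P_1'\cup\dots\cup P_m'$. Since the $T_i$ are vertex-disjoint and span $V(D)$, so do the $P_i'$, hence $F'$ is a spanning out-forest of $D$. For any vertex $x$, lying in some $T_i$, its degree in the underlying graph of $F'$ equals its degree in $P_i$, which is odd. Thus $F'$ is a weak perfect out-forest of $D$, as required.

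I do not expect a genuine obstacle here: the only points needing care are (i) that restricting an out-tree to a subset of its arcs still yields an out-forest, handled by the one-arc-deletion observation above, and (ii) that passing to the disjoint union preserves the parity of vertex degrees, which is immediate because the $T_i$ are vertex-disjoint. The real content is borrowed from Scott's theorem, applied component by component.
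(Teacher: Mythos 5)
Your proof is correct, but it handles the key step differently from the paper. Both arguments begin with the same reduction: it suffices to show that every out-tree of even order contains a spanning weak perfect out-forest, and the vertex-disjointness of the out-trees lets you assemble these per-tree solutions. The paper then proves the per-tree statement by a short self-contained induction: take a vertex $u$ at maximum distance from the root with parent $v$; either $v$ also dominates another leaf $w$ (delete $u,w$, induct, and re-attach the arcs $vu,vw$) or $u$ is the unique out-neighbour of $v$ (delete $u,v$, induct, and add the out-tree $vu$). You instead invoke Theorem~\ref{undirected} on the underlying tree of each out-tree and re-orient the resulting perfect forest, which is fine: your observation that any arc-deleted subdigraph of an out-tree is again an out-forest is correct, and only the odd-degree part of Scott's conclusion is needed. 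What the paper's route buys is that Lemma~\ref{even_to_weak} stays elementary and independent of Scott's theorem, which matters for the paper's narrative: the digraph results are later used to re-derive Theorem~\ref{undirected} (via the almost perfect out-forest in $G^*$), and with your proof that derivation becomes circular as a proof of Scott's theorem, even though your lemma itself remains logically valid since Scott's theorem is established externally. What your route buys is brevity, at the cost of using a much stronger tool than needed --- for trees the Scott statement is essentially trivial (every subtree of a tree is automatically induced), which is exactly what the paper's induction exploits.
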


\begin{proof}
It suffices to show that any out-tree of even order contains a weak perfect out-forest as a spanning subgraph, as we can use this 
on each out-tree in an even out-forest in order to get the desired weak perfect out-forest.
We will prove this by induction on the number of vertices in the out-tree.

Clearly the statement is true if the out-tree has order two, as then the out-tree is also a weak perfect out-forest.
So now assume that $T$ is an out-tree of even order with root $r$ and that all out-trees of smaller order than $T$ 
contain a weak perfect out-forest.  
Let $u$ be a vertex of $T$ of maximum distance from the root $r$ and let $v$ be the unique vertex with an arc into $u$ in $T$.
We will now consider the cases when $v$ has an arc to a leaf different to $u$ in $T$ and when it has no such arc.

If $v$ has an arc to a leaf, say $w$, different from $u$ in $T$ then let $T' = T \setminus \{u,w\}$.
Note that $T'$ is an out-tree rooted at $r$ as $u$ and $w$ are leaves of $T$.
By induction there is a weak perfect out-forest, $F'$, which is a spanning subgraph of $T'$. 
Adding the arcs $vu$ and $vw$ to the out-tree of $F'$ containing $v$ gives us a weak perfect out-forest which is a subgraph of $T$.

We may therefore assume that $v$ does not have an arc to a leaf different from $u$. As the distance from $r$ to $u$ was chosen to be maximum this
implies that $u$ is the only out-neighbor of $v$ in $T$. Let $T'' = T \setminus \{u,v\}$ and note that $T''$ is still an out-tree.
Therefore, by induction there is a weak perfect out-forest, $F''$, which is a spanning subgraph of $T''$.
Adding the out-tree  $vu$ to $F''$ gives us a weak perfect out-forest which is a spanning subgraph of $T$.
This completes the proof.
\end{proof}

\begin{lemma} \label{weak_to_almost}
If $D \in {\cal D}$ contains a weak perfect out-forest then $D$ contains an almost perfect out-forest.
\end{lemma}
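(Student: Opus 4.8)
\emph{Proof plan.} The plan is to take a weak perfect out-forest of $D$ and make it ``as disconnected as possible'', and then to show that such an extremal forest is automatically almost perfect. Concretely, among all weak perfect out-forests of $D$ (at least one exists by hypothesis), pick $F$ with the largest number of out-trees; since $F$ spans $D$, this is the same as picking $F$ with the fewest arcs. I claim this $F$ is an almost perfect out-forest. As $F$ is already a weak perfect out-forest, the out-forest structure and the oddness of all underlying degrees are in place, so all that remains is to show that no out-tree $T$ of $F$ contains a forward arc or a cross arc of $D$.

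Suppose some out-tree $T$ of $F$ does contain such an arc $uv$. Then $uv \notin F$, and also $vu \notin F$ (otherwise $v$ is the parent, hence an ancestor, of $u$ in $T$, contradicting that $uv$ is forward or cross), so $u$ and $v$ are non-adjacent in the underlying graph of $T$; hence the unique path $P$ joining them in that tree has length $\ell \geq 2$ and $C := P \cup \{uv\}$ is a cycle with $|C| = \ell + 1 \geq 3$. Now replace $T$ by $T' := T \triangle C$, i.e.\ delete from $T$ the $\ell$ arcs lying on $P$ and insert the arc $uv$. Taking symmetric difference with a cycle preserves the parity of every vertex degree in the underlying graph, so all vertices of $T'$ still have odd degree, and $T'$ spans $V(T)$. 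The substantive point is that $T'$ is an out-forest made up of $|C| - 1 \geq 2$ out-trees: deleting the $\ell$ arcs of $P$ breaks the tree $T$ into $\ell + 1 = |C|$ pieces, each of which is an out-tree --- this is checked by following which vertices lose their unique incoming arc along $P$ --- and then inserting $uv$ reattaches the piece containing $v$ below $u$, fusing two pieces into one out-tree. Substituting $T'$ for $T$ inside $F$ therefore produces a weak perfect out-forest of $D$ with strictly more out-trees than $F$, a contradiction. Hence no out-tree of $F$ has a forward or cross arc, and $F$ is an almost perfect out-forest.

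I expect the one delicate step to be verifying that $T'$ really is an \emph{out}-forest: one has to track the incoming arcs along $P$, and the forward-arc case (where $P$ is the directed path from $u$ to $v$) and the cross-arc case (where $P$ runs up from $u$ to the least common ancestor of $u$ and $v$ and back down to $v$) must be handled separately. Everything else --- the parity claim, the spanning property, and the bound $|C| - 1 \geq 2$ --- is immediate. As a sanity check one may observe that a \emph{backward} arc inside an out-tree is not eliminated by this cycle flip (flipping it would create a vertex of in-degree two, or only a length-two cycle), which matches the fact that backward arcs are permitted in an almost perfect out-forest; and that the argument uses nothing about $D$ beyond the existence of a weak perfect out-forest.
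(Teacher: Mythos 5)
Your proof is correct and is essentially the paper's argument: you use the same exchange operation (insert the forward/cross arc $uv$, delete the arcs of the tree path $P$ from $u$ to $v$, noting the last arc of $P$ enters $v$ so in-degrees stay at most one and parities are preserved), merely phrased extremally (take a weak perfect out-forest with fewest arcs) instead of iterating the swap until it terminates, as the paper does. The one step you flag as delicate is exactly the paper's one-line observation that the edge of $P$ incident with $v$ is oriented into $v$, so there is no real gap.
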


\begin{proof}
Let $F$ be a weak perfect out-forest in $D$ and let $T$ be an arbitrary  out-tree in $F$.
Assume that there exists an arc $uv$ in $D$, which does not belong to $T$, but is either a forward arc of $T$ or a cross arc of $T$.
In the underlying graph of $T$ let $P$ be the unique path from $u$ to $v$. Let $F_T$ be obtained from $T$ by adding the arc $uv$ and
deleting all the arcs in $T$ which correspond to the edges of $P$. 

Note that all vertices in $F_T$  have maximum in-degree one as the edge on $P$ incident with $v$ corresponded to an arc of $T$ into $v$.
Furthermore in the underlying graph of $F_T$ all vertices still have odd degree as the degrees of $u$ and $v$ remain unchanged and all
other vertices in $P$ have their degree decreased by $2$.
Therefore $F_T$ is a weak perfect out-forest with fewer arcs than $T$.

We continue the above process for as long as possible. As each time the above operation is performed the number of arcs decreases it will terminate.
Furthermore it only terminates when there are no arcs, which don't belong to the weak perfect out-tree, which are either forward or cross arcs in some 
out-tree in the weak perfect out-forest. Therefore we have produced an almost perfect out-forest, which completes the proof.
\end{proof}

As an almost perfect out-forest is also an even out-forest
the two lemmas above imply the following:

\begin{theorem} \label{all_equal}
Every $D \in {\cal D}$ either contains all three of the below or none of the below.

\begin{itemize}
\item Almost perfect out-forest.
\item Weak perfect out-forest.
\item Even out-forest.
\end{itemize}
\end{theorem}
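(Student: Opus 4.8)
The plan is to establish a cycle of implications among the three notions, from which the ``all or none'' dichotomy follows immediately by contraposition. Two of the three implications are already available: by Lemma~\ref{even_to_weak}, the existence of an even out-forest in $D$ yields a weak perfect out-forest, and by Lemma~\ref{weak_to_almost}, the existence of a weak perfect out-forest yields an almost perfect out-forest. It therefore remains only to close the cycle by showing that every almost perfect out-forest of $D$ is in particular an even out-forest.

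For that last step I would argue by a parity count. Let $F$ be an almost perfect out-forest of $D$ and let $T$ be one of its out-trees, say on $n$ vertices. Since $T$ is a tree, its underlying graph has exactly $n-1$ edges, so the sum of the degrees of its vertices in the underlying graph is $2(n-1)$, which is even. By the definition of an almost perfect out-forest, every vertex of $T$ has odd degree in the underlying graph, and a sum of $n$ odd numbers is even precisely when $n$ is even. Hence every out-tree of $F$ has even order, i.e.\ $F$ is an even out-forest.

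Combining the three implications, the existence in $D$ of any one of the three structures forces the existence of all three; equivalently, the non-existence of any one forces the non-existence of all three, which is exactly the statement of the theorem. I do not anticipate any genuine obstacle: the two substantive implications are supplied by the preceding lemmas, and the only remaining point — that a tree all of whose vertices have odd degree has even order — is a one-line consequence of the handshake lemma.
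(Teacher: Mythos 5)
Your proposal is correct and follows essentially the same route as the paper: the two lemmas give the implications even $\Rightarrow$ weak $\Rightarrow$ almost perfect, and the cycle is closed by observing (as the paper does, implicitly via the handshake argument you spell out) that an almost perfect out-forest is itself an even out-forest.
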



Unfortunately, the problem of existence of a perfect out-forest in a strongly connected digraph is intractable.

\begin{theorem} \label{np_hard}
It is $NP$-hard to determine whether a strongly connected digraph has a perfect out-forest.
\end{theorem}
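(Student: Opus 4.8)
The plan is to recast the problem as a partition problem and then reduce from \emph{Exact Cover by 3-Sets} (equivalently, from \emph{3-Dimensional Matching}), both of which are NP-complete. The starting observation is that, because every out-tree of a perfect out-forest $F$ of $D$ must be an induced subgraph of $D$, each out-tree $T$ of $F$ coincides with the subdigraph $D[V(T)]$ induced by its own vertex set. Hence $D$ has a perfect out-forest if and only if $V(D)$ admits a partition into parts $V_1,\dots,V_k$ such that, for every $i$, $D[V_i]$ is an out-tree in which every vertex has odd degree in the underlying graph. By the handshake lemma such an ``odd out-tree'' has even order; the two smallest are a single arc $uv$ with $vu\notin A(D)$, and an out-star on four vertices with all three arcs leaving the centre.

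Given a ground set $U$ with $|U|=3q$ and a family $\mathcal S$ of $3$-element subsets, I would build a digraph $D$ containing a vertex $x_u$ for each $u\in U$ and, for each $S=\{a,b,c\}\in\mathcal S$, a centre vertex $y_S$ together with a private bundle of a few auxiliary vertices. The arcs are chosen so that (i) $\{y_S,x_a,x_b,x_c\}$ can induce an out-star centred at $y_S$, modelling ``$S$ is selected'', and (ii) if $S$ is not selected, $y_S$ and its auxiliary vertices can be partitioned, among themselves, into induced two-vertex out-trees. Since no arcs are placed among $x_a,x_b,x_c$ and no arc enters $y_S$ from its elements, the only way $x_u$ can attain odd degree in its part is to be a leaf of exactly one selected set-star; hence a perfect out-forest of $D$ corresponds precisely to a subfamily of $\mathcal S$ partitioning $U$. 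A little bookkeeping (padding $|\mathcal S|$ to be even, pairing the set-gadgets, or adding a single global balancing gadget) makes the parities of the auxiliary bundles consistent in both states.

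To land inside $\mathcal D^{st}$, I would attach a small rigid strongly connected ``wiring'' gadget joined to a designated port vertex of every element- and set-gadget and carrying directed paths between all gadgets in both directions, using digons wherever an added arc could otherwise be abused: the two endpoints of a digon can never lie in the same part, so such arcs only restrict the set of perfect out-forests. One then verifies that the wiring gadget is forced to contribute one fixed part in any perfect out-forest, so its presence neither removes the intended solutions nor introduces new ones.

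The delicate direction is completeness: showing that $D$ has \emph{no spurious} perfect out-forest, i.e., that every partition of $V(D)$ into induced odd out-trees has the intended shape. This requires ruling out, gadget by gadget, all unintended induced odd out-trees (single arcs, out-stars of every odd degree, and larger odd trees straddling several gadgets), and arguing that the wiring gadget cannot act as an ``escape hatch'' that absorbs leftover vertices into a large out-star. The digon constraints are the principal tool for confining each gadget's vertices to their intended parts; the real obstacle is to make these constraints strong enough while keeping $D$ strongly connected.
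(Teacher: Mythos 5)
Your overall strategy is the same as the paper's (a reduction from 3-Dimensional Matching in which selected triples become induced out-stars of order four and digons are used to forbid arcs from being usable while keeping the digraph strongly connected), but what you have written is a plan rather than a proof, and the parts you defer are exactly the parts that constitute the proof. You never specify the ``private bundle of auxiliary vertices'' attached to each $y_S$, the ``global balancing gadget'' that is supposed to fix parities, or the strongly connected ``wiring'' gadget, and you explicitly leave open the completeness direction (``the real obstacle is to make these constraints strong enough''), i.e.\ the argument that no spurious partition into induced odd out-trees exists. Without a concrete construction and that argument, NP-hardness has not been established. There is also a small factual slip that illustrates the danger: the order-four odd out-trees are not only out-stars with all three arcs leaving the centre; the orientation in which one leaf is the root, with an arc into the centre and two arcs out of it, is also an odd out-tree, and any completeness argument must cope with such shapes.

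For comparison, the paper avoids per-set gadgets and a separate wiring gadget altogether. With $|V_1|=|V_2|=|V_3|=k$ and $|E|=m$, it adds a single independent set $X$ of $m-k$ extra vertices, puts all arcs from $X$ to the set-vertices $Y$, makes $D[Y]$ a complete digraph, and joins $X$ to $V_1\cup V_2\cup V_3$ by digons; the digons and the $2$-cycles inside $Y$ simultaneously give strong connectivity and make those arcs unusable in any perfect out-forest. Then a short counting argument does all the work: each $X$-vertex must be the root of its own tree, that tree contains exactly one $Y$-vertex (completeness of $D[Y]$) and no element vertices (digons), so exactly $k$ trees remain to cover the $k$ leftover $Y$-vertices and all $3k$ element vertices, forcing each to be an out-star of order four and hence a 3-dimensional perfect matching. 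If you want to salvage your write-up, replacing your unspecified per-gadget bookkeeping with this kind of global construction and counting argument is the missing content.
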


\begin{proof}
We will reduce from {\sc $3$-Dimensional Matching} whose instance $I$ is given by three disjoint sets $V_1,V_2,V_3$ of vertices and an edge set $E$ where each edge in $E$ contains exactly one vertex from each of the three sets. The question is whether there is a {\em 3-dimensional perfect matching}, which a collection of disjoint edges of $E$ containing all vertices of the three sets. 
We may assume that $|V_1|=|V_2|=|V_3|=k$ and let $|E|=m$.

We will now construct a strongly connected digraph, $D$ as follows.
Let $X$ be an independent set of $m-k$ vertices and let $Y$ be a set of $m$ vertices (corresponding to the $m$ edges in $E$).
Let $V(D)=V_1 \cup V_2 \cup V_3 \cup X \cup Y$ and add all arcs from $X$ to $Y$ and for each vertex in $Y$ add one arc to each of $V_1,V_2,V_3$ such that the out-neighborhood 
of each vertex in $Y$ corresponds to a distinct edge in $E$.
Finally add all possible arcs between $X$ and $V_1 \cup V_2 \cup V_3$ (in both directions) and add all possible arcs between the vertices of $Y$ (in both directions).
Note that $D$ is strongly connected as $D[X \cup V_1 \cup V_2 \cup V_3]$ is strongly connected (due to all the $2$-cycles) and every vertex in $Y$ has arcs into it from $X$ and arcs out of it to $V_1$ (and $V_2$ and
$V_3$).

We will show that $D$ contains a perfect out-forest if and only if $I$ contains a 3-dimensional perfect matching.

First assume that $I$ has a 3-dimensional perfect matching containing the edges in $M \subseteq E$. Let $Y_M \subseteq Y$ be the vertices of $Y$ corresponding to the edges in $M$.
Take all the out-trees with roots in $Y_M$ and one leaf from each $V_1$, $V_2$ and $V_3$. This leaves $m-k$ vertices in $X$ and $m-k$ vertices in $Y$ which can be matched up 
with a matching, which together with the out-trees form a perfect out-forest.

Now assume that $D$ has a perfect out-forest, $F$. No arcs between $X$ and $V_1 \cup V_2 \cup V_3$ can belong to $F$ as they all belong to $2$-cycles.
Therefore we must have $m-k$ out-trees in $F$ that have a vertex from $X$ as their root. As $D[Y]$ is a complete digraph each of these out-trees contain exactly one vertex from $Y$.
Due to the arcs between $X$ and $V_1 \cup V_2 \cup V_3$ no other vertices can belong to these $m-k$ out-trees.
Removing these $m-k$ out-trees leaves us with $k$ vertices from $Y$ and $k$ vertices in each of $V_1$, $V_2$ and $V_3$. Therefore $F$ must have $k$ out-trees containing
one vertex from $Y$ and one vertex from each of $V_1$, $V_2$ and $V_3$. These $k$ vertices from $Y$ therefore correspond to a 3-dimensional perfect matching in $I$.
This completes the proof.
\end{proof}

Unlike the problem in the previous theorem, the problem of existence of a weak perfect out-forest is tractable, as proved below. Thus, by Theorem~\ref{all_equal}, the problems of existence of an almost perfect out-forest and an even out-forest are also tractable.
\begin{theorem} \label{pol}
In polynomial time, we can decide whether a connected digraph has a weak perfect out-forest.
\end{theorem}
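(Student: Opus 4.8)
The plan is to reduce the existence of a weak perfect out-forest in a connected digraph $D$ to a matching-type (or $T$-join-type) computation in an auxiliary undirected graph, exploiting the out-forest structure. First I would observe that a weak perfect out-forest is a spanning out-forest $F$ in which every vertex has odd degree in the underlying graph of $F$. Equivalently, partition $V(D)$ into the vertex sets of the out-trees; within each out-tree the root has in-degree $0$ and every other vertex has in-degree exactly $1$, and the underlying-graph degree of each vertex must be odd. By Theorem~\ref{all_equal}, it is enough to decide the existence of an \emph{even} out-forest, i.e.\ a spanning out-forest all of whose out-trees have even order; I would work with that characterization since it is the cleanest. So the real task is: decide whether $V(D)$ can be partitioned into blocks of even size, each block inducing (in $D$) a subgraph that contains a spanning out-tree.

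The key structural fact to nail down is: a subset $S \subseteq V(D)$ is the vertex set of some out-tree contained in $D$ if and only if there is a vertex $u \in S$ such that every vertex of $S$ is reachable from $u$ using only arcs inside $D[S]$. Deciding reachability is easy, but we need to assemble such blocks globally. Here I would pass to the condensation: contract each strong component of $D$ to a single vertex to get an acyclic digraph $\tilde D$. An out-tree can only "enter" a strong component through one vertex but, once inside a strong component, can reach all of it; and crucially a strong component of even order is itself spanned by an out-tree (rooted anywhere). This suggests that the problem decomposes along the condensation: we want to group the strong components into out-subtrees of $\tilde D$ so that the total number of vertices (summing component orders) in each group is even, and each group, read in $D$, admits a spanning out-tree. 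Since each even-order strong component is internally "free," the parity bookkeeping reduces to pairing up the odd-order strong components along arcs/paths of $\tilde D$, respecting orientation (the paired components must be joinable by a directed path realizing an out-tree).

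Concretely, I would reduce to the following: form an undirected graph $H$ on the odd-order strong components, where two such components are adjacent if there is a directed path in $\tilde D$ from one to the other passing only through even-order components (plus handle the degenerate cases), and ask for a perfect matching in $H$ — but one must be careful, because a single out-tree can absorb several odd components as long as they are "in line" along a directed path, and because chosen paths for different pairs must be vertex-disjoint at the level of strong components. The cleanest route is probably to phrase the whole thing as finding a spanning forest of $\tilde D$ each of whose trees is an out-tree and has even total weight, which by a standard reduction is a matching/$T$-join problem on the underlying undirected graph of $\tilde D$ with the odd-order components as terminals. I would then check that such a forest lifts back to a genuine even out-forest of $D$ (using that each strong component of even order is out-tree-spanned, and each even-order out-subtree of $\tilde D$ lifts to an out-tree of $D$ of even order). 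The polynomial running time then follows since computing the condensation, building $H$ or the $T$-join instance, and solving matching are all polynomial.

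The main obstacle I anticipate is the global disjointness/feasibility bookkeeping: showing that an out-tree of $\tilde D$ spanning a set of strong components whose orders sum to an even number actually lifts to an out-tree of $D$ on the same vertex set (entering each component at the right vertex), and conversely that every even out-forest of $D$ projects to such a forest of $\tilde D$, so that the matching/$T$-join instance is both sound and complete. Verifying that the auxiliary-graph "adjacency through even components" is transitive enough to make the matching formulation correct — rather than needing a more elaborate path-packing argument — is the delicate point, and I would expect the proof to spend most of its length there.
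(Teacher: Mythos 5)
There is a genuine gap, and it is exactly at the point you flagged as ``delicate'': the passage to the condensation is not sound. Your plan requires that if $D$ has an even out-forest, then it has one (or at least projects to one) whose trees are unions of whole strong components, grouped into out-trees of the condensation $\tilde D$ with even total weight. This fails, because the trees of an even out-forest may split strong components among themselves. A concrete counterexample: let $V(D)=\{a,b,c,d\}$ with arcs $ab$, $ba$, $ca$, $db$. Then $D$ is connected of even order and has the even out-forest consisting of the two trees $c\rightarrow a$ and $d\rightarrow b$. But the strong components are $\{a,b\}$ (order $2$), $\{c\}$ and $\{d\}$ (order $1$ each), and in $\tilde D$ both $\{c\}$ and $\{d\}$ point \emph{into} $\{a,b\}$; every partition of $\tilde D$ into out-trees therefore leaves a tree of odd total weight ($1$ or $3$), since no out-tree of $\tilde D$ can contain both $\{c\}$ and $\{d\}$. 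So your matching/$T$-join instance on the condensation would answer ``no'' on a yes-instance. (A secondary issue: even if the condensation step were sound, the claim that ``partition a weighted DAG into out-trees of even total weight'' is a standard matching/$T$-join problem is asserted, not proved; the pairing graph you describe is not obviously adequate precisely because one tree can absorb many odd components and the connecting paths must be disjoint.)

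The paper avoids the condensation entirely and reduces the \emph{weak perfect out-forest} problem directly to perfect matching via a vertex gadget: for each vertex $u$ of $D$ (with $|V(D)|=2k$) one takes a set $X_u$ of $2k-1$ vertices consisting of a matching of $k-1$ edges plus one isolated vertex $y_u$, and for each arc $uv$ of $D$ one joins all of $X_u$ to $y_v$. A perfect matching then encodes a spanning subdigraph with in-degree at most one (since $y_v$ is matched to at most one foreign gadget) in which every vertex has odd underlying degree (by parity inside $X_u$); the only remaining defect, directed cycles, is repaired afterwards by deleting the arcs of any directed cycle, which lowers each affected degree by $2$ and so preserves both oddness and the in-degree bound. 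That ``allow cycles first, remove them later'' step is the idea your approach is missing: it lets one encode only the local conditions (in-degree and parity) in a matching instance, instead of trying to control the global tree/partition structure on the condensation.
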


\begin{proof}
Let $D$ be any connected digraph. We may assume that $D$ has even order. Let $|V(D)|=n=2k$.
We will create a new graph $G$ as follows.

For every vertex $u \in D$ let $X_u$ be a set of $2k-1$ vertices containing $k-1$ edges that form a matching and let $y_u$ be the isolated vertex in $X_u$.
For every arc $uv$ in $D$ add all edges from $X_u$ to $y_v$.
The resulting graph is the desired graph $G$.
We will show that $G$ has a perfect matching if and only if $D$ contains a weak perfect out-forest. As the problem of deciding whether a graph has a perfect matching, is polynomial-time solvable, this will complete the proof.

Assume that $G$ has a perfect matching, $M$.
Let $F_M$ contain all arcs $uv$ where one of the edges from $X_u$ to $y_v$ belong to $M$ (note that at most one such edge can belong to $M$).
$F_M$ has maximum in-degree one as if $uv$ and $wv$ belong to $F_M$ then two distinct edges of $M$ have $y_v$ as an end-point, a contradiction.
Also, if $j$ edges of $G[X_u]$ are in $M$ then $u$ is incident with $(2k-1)-2j$ arcs in $F_M$ which is clearly an odd number.


If $F_M$ contains a directed cycle then we remove all arcs in such a cycle and note that the resulting subgraph also has maximum in-degree at most
one and every vertex is adjacent with an odd number of vertices. Removing arcs of directed cycles until no directed cycles remain, leave us with 
an acyclic digraph with no in-degree more than $1$ and all vertices incident to an odd number of arcs. This is precisely a weak perfect out-forest.

Now assume that we have a weak perfect out-forest, $F$, in $D$.  
For every arc, say $uv$, in $F$ we add an edge from $X_u$ to $y_v$ to the matching we are building. 
Furthermore we do this such that as few edges of each $X_w$ are covered. As
every vertex in $D$ is incident with an odd number of arcs in $F$ we note that an even number of vertices in each $X_w$ is uncovered.
We can now use the edges within each $X_w$ in order to get a perfect matching in $G$.
This completes the proof.
\end{proof}

The next few simple arguments will show that the claims in rows 2-4 of the table of Section \ref{sec:intro} also hold.

Note that not every connected digraph of even order has an even out-forest. Indeed, 
consider any star $K_{1,r}$ with an odd number of leaves, $r \geq 3$, and direct all edges towards the root. Clearly any out-forest will contain at least $r-1$ trees of order one and
will therefore not be an even out-forest.


However, if $D\in {\cal D}^u$ (and therefore also if $D \in {\cal D}^{st}$)  has an even number of vertices then, by Proposition 1.7.1 of \cite{JBJGG}, $D$ contains a spanning out-tree, which is clearly an even out-forest. Thus, by Theorem~\ref{all_equal}, $D$ also has a weak perfect out-forest and an almost perfect out-forest. 

\section*{Acknowledgement} Research of GG was partially supported by Royal Society Wolfson Research Merit Award.

\end{document}